\newtheorem{defi}{Definition}
\newtheorem{prop}{Proposition}
\newtheorem{coro}{Corollary}
\newtheorem{lemma}{Lemma}
\DeclareMathOperator{\Tr}{Tr}
\newcommand{\R}{\mathbb{R}}
\newcommand{\C}{\mathbb{C}}
\newcommand{\Z}{\mathbb{Z}}
\newcommand{\h}{\mathcal{H}}
\newcommand\ket[1]{|#1\rangle}
\newcommand\proj[1]{|#1\rangle \langle #1|}
\title{Equivalence between contextuality and negativity of the Wigner function for qudits}
\author{
Nicolas Delfosse
	\thanks{Department of Physics and Astronomy, University of California, Riverside, CA, USA}
	$^{,}$\thanks{IQIM, California Institute of Technology, Pasadena, CA, USA} \ \  
Cihan Okay \thanks{Department of Mathematics, University of Western Ontario, London, Ontario, Canada} \ \  
Juan Bermejo-Vega \thanks{Dahlem Center for Complex Quantum Systems, Freie Universit{\"a}t Berlin, Berlin, Germany} \\
Dan E. Browne \thanks{Department of Physics and Astronomy, University College London, Gower Street, London, UK} \ \ and
Robert Raussendorf \thanks{Department of Physics and Astronomy, University of British
Columbia, Vancouver, BC, Canada}
	}
\begin{document}

\maketitle

\begin{abstract}
Contextuality and negativity of the Wigner function are two notions of non-classicality for quantum 
systems. Howard, Wallman, Veitch and Emerson proved recently that these two notions coincide
for qudits in odd prime dimension. This equivalence is particularly important since it promotes contextuality as 
a ressource that magic states must possess in order to allow for a quantum speed-up. 
We propose a simple proof of the equivalence between contextuality and negativity of the 
Wigner function based on character theory. This simplified approach allows us to generalize this equivalence to multiple qudits and to any qudit system of odd local dimension.
\end{abstract}

\section{Introduction}

Understanding what distinguishes quantum mechanics from classical mechanics
and probabilistic models is a central question of physics.
Besides its fondational aspects, this question is crucial for quantum information
processing applications since the features that set quantum and classical mechanics appart
are precisely the properties that we must exploit in order
to obtain a quantum superiority for certain tasks 
\cite{
	Cleve1998:quantum_algo, 
	vidal2003:entanglement, 
	terhal2004:constant_depth,
	vandennest2007:simulation_MBQC, 
	gross2009:overentaglement, 
	anders2009:correlations, 
	galvao2005:DWF_speedup,
	aharonov2006:QFT_simulation,
	yoran2007:QFT_simulation,
	browne2007:QFT,
	markov2008:tensor_network_simulation,
	yoran2008:entanglement,
	vandennest2011:proba_simulation, 
	veitch2012:negativity, 
	vandennest2013:disentangled_universality, 
	raussendorf2013:contextuality_MBQC, 
	howard2014:contextuality, 
	stahlke2014:interference
}.
In the present work, we compare two notions of non-classicality:
Contextuality 
\cite{bell1966:hvm, kochen1975:hvm, abramsky2011:sheaf_contextuality, acin2015:graph_contextuality} 
and negativity of the Wigner function \cite{wigner1932:WF, gibbons2004:DWF, gross2006:hudson}.

\medskip
The ressemblance between contextuality and negativity was first noticed by Spekkens who generalized
these two notions in order to prove that they coincide \cite{spekkens2008:equivalence}.
However, this result remains difficult to apply, since a large number of Wigner functions must be 
probed to identify contextuality. 
Howard {\em et al.} \cite{howard2014:contextuality} showed that, 
if one restricts to a particular class of measurements, namely Pauli measurements,
one can select a particular Wigner function which allows by itself to characterize contextuality. 
They proved that contextuality for stabilizer measurements and negativity of Gross' 
Wigner function \cite{gross2006:hudson} are equivalent for a single qudit of odd prime dimension. 
In the present work, we extends this result to multiple qudits and to any odd local dimension.
We prove that, for quantum systems of odd local dimension, these two notions are equivalent.
Such a neat equivalence between these features introduced in different fields is quite 
unexpected. Indeed, while contextuality is a concept from the fondation of quantum physics, 
Wigner functions originate from quantum optics.
In addition, our proof of this equivalence is much more straightforward. 
Indeed, we directly compute the value of the Wigner function in terms of the hidden 
variable model and we observe that if the hidden variable model is non-contextual then 
the Wigner function is non-negative.

\medskip
Our proof of this equivalence relies on the choice of a simple definition of 
contextuality based on value assignments whereas the work of Howard {\em et al.} 
is based on the graphical formalism of Cabello {\em et al.} \cite{cabello2014:graphcontextuality}.
The relations between these different notions of non-classicality are depicted in Fig.~\ref{fig:NC_equivalences}.
In this work, we consider only contextuality of stabilizer measurements and hidden variable models
are assumed to be deterministic.

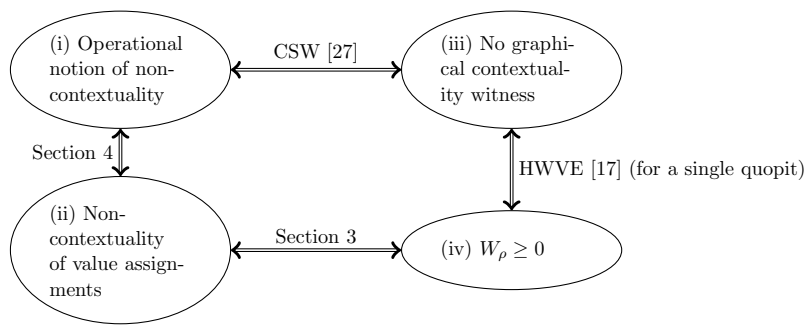
\begin{figure}[ht]
\captionsetup{margin=1.5cm}
\centering
\begin{tikzpicture}[scale=.8, every node/.style={scale=.7}]
\def\r{4}

\draw[ellipse, minimum size = 1.5cm, text width=2.7cm]
	(0,1.5) node[draw] (i) {(i) Operational notion of non-contextuality}
	(0,-1.5) node[draw] (ii) {(ii) Non-contextuality of value assignments}
	(6.5,1.5) node[draw] (iii) {(iii) No graphical contextuality witness}
	(6.5,-1.5) node[draw] (iv) {(iv) $W_\rho \geq 0$};

\draw[<->, double]
	(i) -- (ii) node [midway, left]{Section~\ref{sec:equiv_NC_def}};
\draw[<->, double]
	(i) -- (iii) node [midway, above]{CSW \cite{cabello2014:graphcontextuality}};
\draw[<->, double]
	(iii) -- (iv) node [midway, right]{HWVE \cite{howard2014:contextuality} (for a single quopit)};
\draw[<->, double]
	(iv) -- (ii) node [midway, above]{Section~\ref{sec:equiv_NC_negativity}};

\end{tikzpicture}
\caption{
Relation between different notions of non-classicality.
}
\label{fig:NC_equivalences}
\end{figure}

%
%
%

\medskip
This article is organized as follows.
The necessary stabilizer formalism is recalled in Section~\ref{sec:stabilizer}.
Section~\ref{sec:equiv_NC_negativity} introduces a notion contextuality 
based on value assignment and prove the equivalence between this notion
and the negativity of Gross' discrete Wigner function, {\em i.e.} 
(ii)~$\Leftrightarrow$~(iv).
The purpose of Section~\ref{sec:equiv_NC_def} is to prove the equivalence 
between the 2 notions of contextuality (i) and (ii), completing the square 
in Fig.~\ref{fig:NC_equivalences}.

\section{Background on the stabilizer formalism} \label{sec:stabilizer}

In what follows, we consider the Hilbert space $\h = (\C^d)^{\otimes n}$,
where $d$ is an odd integer and $n$ is a non-negative integer.
We consider an orthonormal basis $(\ket a)_{a \in \Z_d}$ of $\C^d$.
The $n$-fold tensor products of these vectors provide an orthonormal
basis  $(\ket{ {\bf a}})_{ {\bf a} \in \Z_d^n}$ of the Hilbert space $\h$ indexed by $\Z_d^n$.

\medskip
The space $V = \Z_d^{n} \times \Z_d^{n}$ that is called the {\em phase space}
and will be used to index Pauli operators acting on $\h$.
Vectors in $V$ are denoted $({\bf u_Z},{\bf u_X})$, 
where both ${\bf u_Z}$ and ${\bf u_X}$ live in $\Z_d^n$.
The space $\Z_d^n$ is equipped with the standard inner product
$({\bf a}|{\bf b}) = \sum_{i=1}^n a_i b_i$ for all ${\bf a, b} \in \Z_d^n$,
whereas the phase space $V$ is equipped with the symplectic inner product 
defined by 
$$
[{\bf u,v}] = ({\bf u_Z}|{\bf v_X}) - ({\bf u_X}|{\bf v_Z}) \mod d
$$
where 
${\bf u} = ({\bf u_Z}, {\bf u_X}) \in V$ and 
${\bf v} = ({\bf v_Z}, {\bf v_X}) \in V$.

\medskip
{\em Pauli matrices} can be generalized to obtain matrices acting on $\C^d$
as follows. 
Let $\omega$ be the $d$-th root of unity, $\omega = e^{2i\pi/d}$.
The Pauli matrices $X$ and $Z$ are defined by
$$
X \ket a = \ket{a+1}, \quad Z \ket a = \omega^a \ket a
$$
for all $a \in \Z_d$.
Tensor products of these matrices are denoted 
$Z^{{\bf a}} = Z^{a_1} \otimes \dots Z^{a_n}$ and
$X^{{\bf a}} = X^{a_1} \otimes \dots X^{a_n}$
where ${\bf a} = (a_1, \dots, a_n) \in \Z_d^n$. They 
satisfy $Z^{{\bf a}} X^{{\bf b}} = \omega^{({\bf a} | {\bf b})} X^{{\bf b}} Z^{{\bf a}}$.

\medskip
{\em Generalized Pauli operators} acting on this Hilbert space $\h$ are of the form 
$\omega^{a} Z^{{\bf u_Z}} X^{{\bf u_X}}$ where ${\bf u = (u_Z, u_X)} \in V$, 
and $a \in \Z_d$.
Just as qubits Pauli operators, they form a group.
We fix the phase of $Z^{{\bf u_Z}} X^{{\bf u_X}}$ 
to be $\omega^{-({\bf u_Z|u_X})/2}$. This defines {\em Heisenberg-Weyl operators} 
\begin{equation} \label{eqn:Tu}
T_{\bf u} = \omega^{-({\bf u_Z|u_X})/2} Z^{{\bf u_Z}} X^{{\bf u_X}}
\end{equation}
for all ${\bf u} \in V$. Recall that, since $d$ is an odd integer, the term 
$({\bf u_Z|u_X})/2$ in Eq.\eqref{eqn:Tu} is a well defined element of $\Z_d$.
This choice is well suited to describe measurement outcomes since $T_{\bf u} T_{\bf v} = \omega^{[{\bf u, v}]/2} T_{{\bf u+v}}$,
in particular two operators $T_{\bf u}$ and $T_{\bf v}$ can be measured simultaneously if and only
if $[{\bf u, v}] = 0$, which implies $T_{\bf u} T_{\bf v} = T_{{\bf u+v}}$.
Their commutation relation depends on the symplectic inner product as follows
$T_{\bf u} T_{\bf v} = \omega^{[{\bf u, v}]}T_{\bf v} T_{\bf u}$.
These operators satisfy $(T_{\bf u})^d = 1$ which proves that their eigenvalues belong to the group
$U_d = \{ \omega^{s} \ | \ {s} \in \Z_d \}$ of $d$-th roots of unity.

\medskip
Measuring a family of $m$ mutually commuting operators 
$C = \{T_{{\bf a_1}}, \dots, T_{{\bf a_m}} \}$
returns the outcome ${\bf s} = (s_1, \dots, s_m) \in \Z_d^m$ with probability
$\Tr(\Pi_{\bf u}^{\bf s})$, where $\Pi_{\bf u}^{\bf s}$ 
is the projector onto the common eigenspace of the operators $T_{{\bf a_i}}$
with respective eigenvalue $\omega^{s_i}$.
When no confusion is possible, this projector is simply denoted $\Pi_C^{\bf s}$.
For a single operator $T_{\bf u}$, we have 
$T_{\bf u} = \sum_{{\bf s} \in \Z_d} \omega^{\bf s} \Pi_{\bf u}^{\bf s}$,
by definition of these projectors.
Moreover, $\Pi_{\bf u}^{\bf s}$ can be obtained from the operator
$T_{\bf u}$ as
\begin{equation} \label{eqn:projector_single_meas}
\Pi_{\bf u}^{\bf s} = \frac{1}{d} \sum_{k \in \Z_d} \omega^{-k{\bf s}} T_{\bf u}^k \cdot
\end{equation}
In order to generalize Eq.\eqref{eqn:projector_single_meas} 
to a family $C = \{T_{{\bf a_1}}, \dots, T_{{\bf a_m}} \}$ of operators, 
introduce the $\Z_d$-linear subspace 
$M_{\bf a} = \langle {\bf a_1, \dots, a_m} \rangle$ of $V$
generated by the vectors ${\bf a_i}$.
Any measurement outcome ${\bf s} \in \Z_d^m$ induces a $\Z_d$-linear 
form $\ell_{\bf s}: M_{\bf a} \mapsto \Z_d$ defined by
$
\ell_{\bf s}( \sum_{i=1}^m x_i {\bf a_i}) = \sum_{i=1}^m x_i s_i
$
where $x_i \in \Z_d$ for all $i=1, \dots, m$.
This map $\ell_{\bf s}$ parametrizes the group generated 
by the operators $\omega^{s_i} T_{\bf a_i}$ as follows
$
\langle \omega^{s_i} T_{\bf a_i} \rangle = \{ \omega^{\ell_{\bf s}({\bf u})} T_{\bf u} \ | \ {\bf u} \in M_{\bf a} \}
$.
The projector $\Pi_{\bf a}^{\bf s}$ can be written
\begin{equation} \label{eqn:projectors_sum}
\Pi_{\bf a}^{\bf s} = \frac{1}{|M_{\bf a}|} \sum_{{\bf u} \in M_{\bf a}}  \omega^{\ell_{\bf s}({\bf u})} T_{\bf u} \cdot
\end{equation}

\section{Contextuality of value assignments and negativity} \label{sec:equiv_NC_negativity}

In this section, we prove that the notion of non-contextuality based on the existence
of coherent value assignments is equivalent to the non-negativity of the discrete Wigner 
function. This equivalence was proven by Howard {\em et al.} \cite{howard2014:contextuality} 
in the case of a single qudit in odd prime dimension. 
We propose a simple proof of this result which allows us to generalize this equivalence 
to multiple qudits and to any odd local dimension.

\medskip
Recall that contextuality refers to the fact that measurement outcomes cannot be described in a deterministic 
way. One cannot associate a fixed outcome $\lambda(A)$ with each observable $A$ in such a way 
that this value is simply revealed after measurement. The algebraic relations between 
compatible observables must be satisfied by outcomes as well, making the existence of 
such pre-existing outcomes impossible.
For instance, given two commuting operators $A$ and $B$, the three observables
$A$, $B$ and $C=AB$ can be measured simultaneously and the values $\lambda(A), \lambda(B)$ 
and $\lambda(C)$ associated with these operators must satisfy 
$\lambda(C) = \lambda(A) \lambda(B)$. No value assignment satisfying all these algebraic 
constraints exists in general.

\medskip
The Wigner functions of a state is a description of this state that was 
introduced in quantum optics in order to identify states with a classical 
behavior. Quantum states with a non-negative Wigner function are 
considered as quasi-classical states. The non-negativity of the Wigner function
allows to describe the statistics of the outcomes of a large class of measurements
in a classical way.
The success of this representation motivated their generalization to finite
dimensional Hilbert spaces, called discrete Wigner function. 
Different generalizations have been considered and 
finding a finite-dimensional Wigner representation that behaves as nicely as its original 
infinite-dimensional version \cite{wigner1932:WF} of use in quantum optics is a non-trivial question.
The qubit case illustrates the difficulty of this task \cite{delfosse2015:rebits, raussendorf2015:contextuality_QCSI}.
In this work, we restrict ourselve to system of qudits with odd local dimension and we 
consider Gross' discrete Wigner function which encloses most of the features of
its quantum optics counterpart \cite{gross2006:hudson}.

\subsection{Value assignments are characters in odd local dimension}

In this definition, the HVM associates a deterministic eigenvalue
$\lambda_{\nu}({\bf a}) \in U_d$ with each operator $T_{\bf a}$. 
Measurements only reveal this pre-existing values 
that are independent on other compatible measurements being performed.
Formally, non-contextual value assignments are defined as 
follows.

\begin{defi} \label{def:value_assignment}
Let $\rho$ be a density matrix over $\h$.
A {\em set of NC value assignments} for the state $\rho$ is a triple $(S, q_\rho, \lambda)$ where 
$S$ is a finite set, $q_\rho$ is a probability distribution over $S$ and $\lambda$ is
a collection of maps $\lambda_\nu: V \rightarrow U_d$, for each $\nu \in S$, such that
\begin{enumerate}
\item for all ${\bf u, v} \in V, [{\bf u, v}] = 0$ implies $\lambda_\nu({\bf u + v}) = \lambda_\nu({\bf u}) \lambda_\nu({\bf v})$,
\item for all ${\bf u} \in V$,
$
\Tr(T_{\bf u} \rho) = \sum_{\nu \in S} \lambda_\nu({\bf u}) q_{\rho}(\nu).
$
\end{enumerate}
\end{defi}

Without loss of generality, one can assume that the value assignment associated
with distinct states $\nu \in S$ are distincts. 
Then, $S$ is necessarily a finite set since there is only a finite number of distinct 
maps $\lambda_\nu$ from $V$ to $U_p$.
A map $\lambda_\nu$ which satisfies 
$\lambda_\nu({\bf u + v}) = \lambda_\nu({\bf u}) \lambda_\nu({\bf v})$
for all ${\bf u, v}$ such that $[{\bf u, v}] = 0$, is called a {\em value assignment}.
Recall that the value $\lambda_\nu({\bf u})$ is associated with the operator $T_{\bf u}$
defined in Eq.\eqref{eqn:Tu}.
With this phase convention, a value assignment is defined in such a 
way that the value of the product $T_{\bf u+v} = T_{\bf u} T_{\bf v}$ 
of two compatible operators is the products of their values.
As we will see later, multiplicativity of $\lambda_\nu$ whenever $[{\bf u, v}]=0$ is the
non-contextuality assumption whereas the condition
$\Tr(T_{\bf u} \rho) = \sum_{\nu \in S} \lambda_\nu({\bf u}) q_{\rho}(\nu)$
ensures that this triple is sufficient to recover the prediction of quantum mechanics.

\medskip
A value assignment is a map $\lambda_\nu$ that satisfies the 
constraint $\lambda_\nu({\bf u + v}) = \lambda_\nu({\bf u}) \lambda_\nu({\bf v})$,
for all pairs of vectors such that $[{\bf u, v}] = 0$.
For instance, the $|V| = d^{2n}$ characters of $V$, which satisfy this constraint for all pairs 
${\bf u, v}$, provide $d^{2n}$ value assignments $\lambda_\nu$.
However, if value assignments $\lambda_\nu: V \rightarrow \C^*$ coicinde with a character
of $V$ over any isotropic subspace, nothing in Def.\ref{def:value_assignment} guarantees
that these assignments are actually characters of $V$. The next lemma proves this property,
\emph{i.e.} the only consistent value assignments of the HVM are given by the $d^{2n}$ characters of $V$.

\begin{lemma} \label{lemma:value_are_characters}
For any odd integer $d > 1$ and for any integer $n \geq 2$,
value assignments $\lambda_\nu$ are characters of $V$.
\end{lemma}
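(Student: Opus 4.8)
The plan is to regard each value assignment $\lambda_\nu$ as a map $V \to U_d$ and to show directly that it is a group homomorphism $(V,+) \to U_d$, which is exactly the statement that it is one of the $d^{2n}$ characters of $V$. The hypothesis gives multiplicativity only on commuting pairs, $\lambda_\nu({\bf u+v}) = \lambda_\nu({\bf u})\lambda_\nu({\bf v})$ whenever $[{\bf u,v}]=0$; in particular $\lambda_\nu(k{\bf u}) = \lambda_\nu({\bf u})^k$ for every $k \in \Z_d$. First I would record that the value assignments form an abelian group under pointwise multiplication and that this group contains all characters of $V$, so I may divide $\lambda_\nu$ by any character without leaving the class of value assignments. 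Writing $L_Z = \{({\bf u_Z},0)\}$ and $L_X = \{(0,{\bf u_X})\}$, both of which are isotropic, the restrictions $\lambda_\nu|_{L_Z}$ and $\lambda_\nu|_{L_X}$ are genuine characters of $\Z_d^n$ and hence extend to characters of $V$; dividing $\lambda_\nu$ by these I may assume without loss of generality that $\lambda_\nu \equiv 1$ on $L_Z \cup L_X$. It then suffices to prove $\lambda_\nu \equiv 1$ everywhere.

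Next I would enlarge the zero set. If $({\bf u_Z}|{\bf u_X}) = 0$ then $({\bf u_Z},0)$ and $(0,{\bf u_X})$ commute, so $\lambda_\nu({\bf u_Z},{\bf u_X}) = \lambda_\nu({\bf u_Z},0)\,\lambda_\nu(0,{\bf u_X}) = 1$; thus $\lambda_\nu$ is already trivial on the whole ``null cone'' $\{{\bf u} : ({\bf u_Z}|{\bf u_X}) = 0\}$, not merely on $L_Z \cup L_X$. Moreover, decomposing ${\bf u}$ into its single-qudit pieces ${\bf u}^{(i)} = (({\bf u_Z})_i\,{\bf e}_i,\ ({\bf u_X})_i\,{\bf e}_i)$, where ${\bf e}_i$ is the $i$-th standard basis vector of $\Z_d^n$, the partial sums commute with the next piece because the supports are disjoint, so $\lambda_\nu({\bf u}) = \prod_{i=1}^n \lambda_\nu({\bf u}^{(i)})$. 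Hence the whole statement reduces to showing $\lambda_\nu(a{\bf e}_1, b{\bf e}_1) = 1$ for arbitrary $a,b \in \Z_d$.

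This last step is where I expect the real work to lie, and where the hypotheses $n \ge 2$ and $d$ odd become essential. The obstruction to a naive argument is that the quantity $({\bf u_Z}|{\bf u_X})$ is invariant under adding any pure-$Z$ or pure-$X$ vector that commutes with the current point, so the target $(a{\bf e}_1,b{\bf e}_1)$, for which this quantity equals $ab$, cannot be reached from the null cone by such moves. The remedy is to split the target into two commuting vectors that each lie back on the null cone, using a second qudit as a register and halving, which is legitimate since $2$ (hence $4$) is invertible in $\Z_d$. Concretely I would set
\[
{\bf p} = \Bigl(\tfrac a2 {\bf e}_1 + {\bf e}_2,\ \tfrac b2 {\bf e}_1 - \tfrac{ab}{4}{\bf e}_2\Bigr), \qquad {\bf q} = \Bigl(\tfrac a2 {\bf e}_1 - {\bf e}_2,\ \tfrac b2 {\bf e}_1 + \tfrac{ab}{4}{\bf e}_2\Bigr),
\]
so that ${\bf p} + {\bf q} = (a{\bf e}_1, b{\bf e}_1)$. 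A direct check gives $[{\bf p},{\bf q}] = \tfrac{ab}2 - \tfrac{ab}2 = 0$ and $({\bf p_Z}|{\bf p_X}) = ({\bf q_Z}|{\bf q_X}) = \tfrac{ab}{4} - \tfrac{ab}{4} = 0$, so ${\bf p}$ and ${\bf q}$ are commuting null-cone vectors. Therefore $\lambda_\nu(a{\bf e}_1,b{\bf e}_1) = \lambda_\nu({\bf p})\lambda_\nu({\bf q}) = 1$, which closes the reduction and shows that the normalized $\lambda_\nu$ is trivial; undoing the normalization exhibits the original $\lambda_\nu$ as a character of $V$. The only genuinely delicate points to verify are the two displayed identities and the observation that no second qudit is available when $n=1$, which is consistent with the lemma requiring $n \ge 2$.
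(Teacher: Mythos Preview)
Your argument is correct, and it shares the essential mechanism with the paper's proof: both reduce to a single symplectic plane via the pairwise orthogonality of distinct qudits, and both resolve that plane by borrowing a second qudit together with the invertibility of $2$ in $\Z_d$. The organization, however, is genuinely different. The paper works directly in additive notation and proves $\lambda(\alpha_i{\bf e}_i+\beta_i{\bf f}_i)=\lambda(\alpha_i{\bf e}_i)+\lambda(\beta_i{\bf f}_i)$ by introducing auxiliary vectors ${\bf u'}=\beta_i{\bf e}_j$, ${\bf v'}=\alpha_i{\bf f}_j$ in a second plane $P_j$ with $[{\bf u'},{\bf v'}]=[{\bf u},{\bf v}]$, writing ${\bf u}+{\bf v}=\tfrac12\bigl(({\bf u+v+u'+v'})+({\bf u+v-u'-v'})\bigr)$, and then repeatedly splitting along symplectic-orthogonal pairs until everything cancels. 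You instead first normalize by dividing out characters so that $\lambda_\nu\equiv 1$ on $L_Z\cup L_X$, extend this to the whole null cone $\{({\bf u_Z}\,|\,{\bf u_X})=0\}$, and finish with an explicit commuting null-cone decomposition ${\bf p}+{\bf q}=(a{\bf e}_1,b{\bf e}_1)$. Your normalization converts ``prove additivity'' into ``prove triviality'', which makes the final step a single clean identity; the paper's version skips the preamble and is a bit more symmetric. Neither route is materially shorter or more general, and both make equally transparent why $n\geq 2$ and $d$ odd are needed.
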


\begin{proof}
To make the proof easier to follow, we regard $\lambda$ as a map
from $V$ to $\Z_d$ (through the group isomorphism between $U_d$ and $\Z_d$) 
and we use the additive notation $\lambda({\bf u}) + \lambda({\bf v})$
instead of $\lambda({\bf u}) \lambda({\bf v})$ in $\C$.
We already know that $\lambda({\bf u+v}) = \lambda({\bf u}) + \lambda({\bf v})$
whenever $[{\bf u, v}] = 0$.
In particular for all ${\bf u} \in V$ and for all $k \in \Z_d$ we have 
$\lambda(k{\bf u}) = k\lambda({\bf u})$.

Consider the canonical basis of $\Z_d^n \times \Z_d^n$ that we denote 
$({\bf e_1, \dots, e_n, f_1, \dots f_n})$. Clearly, $[{\bf e_i, f_i}] = 1$ and the 
planes $P_i = \langle {\bf e_i, f_i} \rangle$ are pairwise orthogonal
for all $i = 1, \dots, n$.
The orthogonality between these planes allows us to write
$$
\lambda({\bf u}) = \sum_{i=1}^n \lambda(\alpha_i {\bf e_i} + \beta_i {\bf f_i})
$$
for all ${\bf u} = \sum_{i=1}^n (\alpha_i {\bf e_i} + \beta_i {\bf f_i})$.
It remains to prove that the restriction of $\lambda$ to any plane $P_i$ is additive.

We will use a second plane $P_j = \langle {\bf e_j, f_j} \rangle$ with 
$j \neq i$ (which exists only when $n\geq 2$).
Denote ${\bf u} = \alpha_i {\bf e_i}$ and ${\bf v} = \beta_i {\bf f_i}$ and define 
${\bf u'} = \beta_i {\bf e_j}$ and ${\bf v'} = \alpha_i {\bf f_j}$ so that $[{\bf u, v}] = [{\bf u', v'}]$.
To conclude it suffices to prove that
$\lambda({\bf u+v}) = \lambda({\bf u}) + \lambda({\bf v})$.
Write
$$
{\bf u+v} = \frac{1}{2}\left( ({\bf u + v + u' + v'}) + ({\bf u + v - u' - v'}) \right) \cdot
$$
This decomposition is chosen in such a way that $({\bf u + v + u' + v'})$ and
$ ({\bf u + v - u' - v'})$ are orthogonal:
$$
{\bf [(u + v + u' + v'), (u + v - u' - v')] = [u, v] + [v, u] + [u', -v'] + [v', -u']} = 0\cdot
$$
We will also use the orthogonality relations $[{\bf u\pm v', v \pm u'}] = 0$
and between the planes $P_i$ and $P_j$.
We obtain
\begin{align*}
\lambda({\bf u+v}) 
& = \lambda \left( \frac{1}{2} ({\bf u + v + u' + v'}) +  \frac{1}{2}({\bf u + v - u' - v'})  \right)\\
& =  \frac{1}{2} \lambda ({\bf u + v + u' + v'}) +  \frac{1}{2}\lambda({\bf u + v - u' - v'})\\
& =  \frac{1}{2} \lambda\left({\bf (u + v') + (v + u')}\right) +  \frac{1}{2}\lambda\left({\bf (u - v') + (v - u')}\right)\\
& =  \frac{1}{2} \left( \lambda({\bf u + v'}) + \lambda({\bf v + u'}) +  \lambda({\bf u - v'}) +\lambda({\bf v - u'}) \right)\\
& =  \frac{1}{2} \left( \lambda({\bf u}) + \lambda({\bf v'}) + \lambda({\bf v}) + \lambda({\bf u'}) +  \lambda({\bf u}) - \lambda({\bf v'}) +\lambda({\bf v}) - \lambda({\bf u'}) \right))\\
& =  \lambda({\bf u}) + \lambda({\bf v})\cdot
\end{align*}
This proves that $\lambda$ is a character.
\end{proof}

\subsection{Discrete Wigner functions}

The purpose of this section is to recall that a set of NC value assignments can be 
derived from a discrete Wigner function of a state whenever this function is non-negative.

\medskip
Quantum states are generally represented by their density matrix $\rho$.
The {\em Wigner function} $W_\rho$ of a state $\rho$ is an alternative 
description of the state $\rho$. This representation is sometimes more convenient 
than the density matrix.
Wigner functions have been introduced in quantum optics \cite{wigner1932:WF} 
and the negativity of the Wigner function of a state is regarded as an indicator of 
non-classicality of this quantum state.
In the present work, we consider their finite dimensional generalization which is 
called {\em discrete Wigner function} \cite{gibbons2004:DWF, galvao2005:DWF_speedup}.

\medskip
We focus on Gross' discrete Wigner function \cite{gross2006:hudson} $W_\rho: V \rightarrow \R$ 
defined by 
$
W_\rho({\bf u}) = d^{-n} \Tr(A_{\bf u} \rho)
$
where $A_{\bf u} =  d^{-n} \sum_{{\bf v} \in V} \omega^{[{\bf u, v}]}T_{\bf v}$.
The family $(A_{\bf u})_{{\bf u} \in V}$ is an orthonormal
basis of the space of $(d^n \times d^n)$-matrices
equipped with the inner product $(A | B) = d^{-n} \Tr(A^\dagger B)$.
The values $W_\rho({\bf u})$ for ${\bf u} \in V$ are simply the coefficients of 
the decomposition of the matrix $\rho$ in this basis:
\begin{equation} \label{eq:W_rho_decomposition}
\rho = \sum_{{\bf u} \in V} W_\rho({\bf u}) A_{\bf u}.
\end{equation}
This proves that the Wigner function $W_\rho$ fully describes 
the state $\rho$.

\medskip
In order to describe measurement outcomes, the Wigner representation 
can be extended to POVM elements $(E_s)_s$.
The {\em Wigner function of a POVM element} $E_s$ is defined to be
$
W_{E_s}({\bf u}) = \Tr(E_s A_{\bf u}).
$
This definition is chosen in such a way that the probability $\Tr(E_s \rho)$ of the 
outcome $s$ is given in terms of the Wigner functions $W_\rho$ and $W_{E_s}$
by
\begin{equation} \label{eqn:proba_Wigner}
\Tr(E_s \rho) = \sum_{{\bf u} \in V}W_{E_s}({\bf u}) W_\rho({\bf u})
\end{equation}
This expression is obtained by replacing $\rho$ by its decomposition \eqref{eq:W_rho_decomposition}.

\medskip
Consider for instance the measurement of an operator $T_{\bf a}$.
It corresponds to the POVM elements $(\Pi_{\bf a}^s)_{s \in \Z_d}$.
Calculating the value of the Wigner function of the POVM element $\Pi_{\bf a}^s$,
we find
$
W_{\Pi_{\bf a}^s}({\bf u}) = \delta_{[{\bf a,u}], s},
$
and injecting this in Eq.\eqref{eqn:proba_Wigner} yields
$
\Tr(\Pi_{\bf a}^s \rho) = \sum_{{\bf u} \in V} \delta_{[{\bf a,u}], s}.
$
Using the decomposition of $T_{\bf a}$ as a sum of projectors, this provides the 
expectation of $T_{\bf a}$.
\begin{lemma} \label{lemma:expectation_Wigner}
For all Heisenberg-Weyl operators $T_{\bf a}$ given in Eq.\eqref{eqn:Tu}, it holds that
$$
\Tr(T_{\bf a} \rho) = \sum_{{\bf u} \in V} W_{\rho}({\bf u}) \omega^{[{\bf a, u}]} \cdot
$$
\end{lemma}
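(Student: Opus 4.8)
The plan is to verify the identity by direct substitution of the definitions of $W_\rho$ and $A_{\bf u}$, reducing everything to a single character sum over the phase space $V$. Concretely, I would start from the right-hand side, replace $W_\rho({\bf u}) = d^{-n}\Tr(A_{\bf u}\rho)$, and pull the sum inside the trace, writing $\sum_{{\bf u}\in V} W_\rho({\bf u})\,\omega^{[{\bf a,u}]} = d^{-n}\Tr\left( \left(\sum_{{\bf u}\in V}\omega^{[{\bf a,u}]} A_{\bf u}\right)\rho\right)$. The whole computation then reduces to identifying the operator $\sum_{{\bf u}\in V}\omega^{[{\bf a,u}]} A_{\bf u}$.

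Next I would insert $A_{\bf u} = d^{-n}\sum_{{\bf v}\in V}\omega^{[{\bf u,v}]} T_{\bf v}$ and exchange the order of summation to collect the Heisenberg-Weyl operators, obtaining $d^{-n}\sum_{{\bf v}\in V} T_{\bf v}\sum_{{\bf u}\in V}\omega^{[{\bf a,u}]+[{\bf u,v}]}$. Using bilinearity and antisymmetry of the symplectic form, the exponent simplifies to $[{\bf a,u}]+[{\bf u,v}] = [{\bf a-v},{\bf u}]$, so the inner sum becomes $\sum_{{\bf u}\in V}\omega^{[{\bf a-v},{\bf u}]}$.

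The key step, and the only one that is not purely formal, is the character-sum identity $\sum_{{\bf u}\in V}\omega^{[{\bf w},{\bf u}]} = d^{2n}\,\delta_{{\bf w},0}$. This is where non-degeneracy of the symplectic inner product enters: for fixed ${\bf w}$ the map ${\bf u}\mapsto\omega^{[{\bf w},{\bf u}]}$ is a character of the finite abelian group $V$, and it is the trivial character precisely when ${\bf w}=0$, since $[{\bf w},{\bf u}]=0$ for all ${\bf u}$ forces ${\bf w_Z}={\bf w_X}=0$ (this holds for every odd $d$, composite or prime, and is worth stating carefully so the lemma is not restricted to odd primes); summing a nontrivial character over $V$ gives $0$. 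Applying this with ${\bf w}={\bf a}-{\bf v}$ collapses the sum to the single term ${\bf v}={\bf a}$, yielding $\sum_{{\bf u}\in V}\omega^{[{\bf a,u}]} A_{\bf u} = d^{-n}\cdot d^{2n}\,T_{\bf a} = d^n T_{\bf a}$. Substituting back gives $d^{-n}\Tr(d^n T_{\bf a}\rho) = \Tr(T_{\bf a}\rho)$, which is the claim.

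I expect no genuine obstacle: the argument is essentially the observation that the symplectic Fourier transform relating the bases $(T_{\bf v})$ and $(A_{\bf u})$ is, up to the normalization $d^{n}$, its own inverse. As an alternative that matches the surrounding exposition, one can instead use the decomposition $T_{\bf a} = \sum_{s\in\Z_d}\omega^s\Pi_{\bf a}^s$ together with the already computed POVM Wigner function $W_{\Pi_{\bf a}^s}({\bf u}) = \delta_{[{\bf a,u}],s}$ and Eq.~\eqref{eqn:proba_Wigner}: then $\Tr(T_{\bf a}\rho) = \sum_{s}\omega^s\sum_{{\bf u}\in V}\delta_{[{\bf a,u}],s}\,W_\rho({\bf u}) = \sum_{{\bf u}\in V} W_\rho({\bf u})\,\omega^{[{\bf a,u}]}$, where the last equality uses $\sum_{s}\omega^s\delta_{[{\bf a,u}],s} = \omega^{[{\bf a,u}]}$.
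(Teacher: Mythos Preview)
Your argument is correct. Your primary route---inserting the definitions of $W_\rho$ and $A_{\bf u}$ and collapsing the double sum via the character identity $\sum_{{\bf u}\in V}\omega^{[{\bf w},{\bf u}]}=d^{2n}\delta_{{\bf w},0}$---is a direct symplectic Fourier inversion and is not the route taken in the paper. The paper instead follows exactly what you sketch as your ``alternative'': it first computes $W_{\Pi_{\bf a}^s}({\bf u})=\delta_{[{\bf a,u}],s}$, plugs this into Eq.~\eqref{eqn:proba_Wigner} to get $\Tr(\Pi_{\bf a}^s\rho)=\sum_{{\bf u}}W_\rho({\bf u})\,\delta_{[{\bf a,u}],s}$, and then uses $T_{\bf a}=\sum_{s}\omega^{s}\Pi_{\bf a}^{s}$. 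Your main approach is slightly more self-contained (it does not rely on having already computed the POVM Wigner function or on Eq.~\eqref{eqn:proba_Wigner}) and makes explicit the non-degeneracy of $[\cdot,\cdot]$ over $\Z_d$ for arbitrary odd $d$; the paper's route has the advantage of reusing machinery already set up in the surrounding text. Either way the content is the same, and since you already identified the paper's path as an alternative, there is nothing to fix.
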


Comparing Lemma~\ref{lemma:expectation_Wigner} with Def.~\ref{def:value_assignment},
we see that the triple $(V, W_\rho, (\lambda_{\bf u})_{{\bf u} \in V} )$ is a set of NC value assignments for
$\lambda_{\bf u}({\bf a}) = \omega^{[{\bf a, u}]}$ whenever the Wigner function of the
state $\rho$ is non-negative. Note that all characters of $V$ can be written as
$\omega^{[*, {\bf u}]}$ for some ${\bf u} \in V$.

\subsection{Equivalence between NC value assignments and non-negativity}

The previous section shows that non-negativity of $W_\rho$ implies the existence of 
a set of NC value assignments for the state $\rho$. We now prove the converse statement.

\begin{prop} \label{prop:NC_implies_non_negativity}
Let $n \geq 2$ and let $\rho$ be a $n$-qudits state of odd local dimension $d > 1$.
If $\rho$ admits a set of NC value assignments 
then $W_\rho \geq 0$.
\end{prop}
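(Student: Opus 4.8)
The plan is to combine the structural result of Lemma~\ref{lemma:value_are_characters} with the Fourier expansion of $\Tr(T_{\bf a}\rho)$ supplied by Lemma~\ref{lemma:expectation_Wigner}, and then to read off $W_\rho$ as a genuine probability distribution on $V$. Suppose $\rho$ admits a set of NC value assignments $(S, q_\rho, \lambda)$. Since $n\geq 2$, Lemma~\ref{lemma:value_are_characters} tells me that every $\lambda_\nu$ is a character of $V$; and because the symplectic form is non-degenerate, each such character can be written as $\lambda_\nu({\bf a}) = \omega^{[{\bf a}, {\bf w}_\nu]}$ for a uniquely determined ${\bf w}_\nu \in V$, as already noted after Lemma~\ref{lemma:expectation_Wigner}.

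Next I would transport the distribution $q_\rho$ from $S$ to $V$ along the map $\nu\mapsto {\bf w}_\nu$. Define $p: V\to\R$ by $p({\bf u}) = \sum_{\nu:\, {\bf w}_\nu = {\bf u}} q_\rho(\nu)$; by construction $p$ is non-negative and sums to $1$. Condition~2 of Definition~\ref{def:value_assignment} then rewrites, for every ${\bf a}\in V$, as
$$
\Tr(T_{\bf a}\rho) = \sum_{\nu\in S}\omega^{[{\bf a}, {\bf w}_\nu]}\, q_\rho(\nu) = \sum_{{\bf u}\in V} p({\bf u})\,\omega^{[{\bf a}, {\bf u}]}.
$$

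On the other hand, Lemma~\ref{lemma:expectation_Wigner} gives $\Tr(T_{\bf a}\rho) = \sum_{{\bf u}\in V} W_\rho({\bf u})\,\omega^{[{\bf a}, {\bf u}]}$ for every ${\bf a}$. Subtracting the two expansions, the function $c = W_\rho - p$ satisfies $\sum_{{\bf u}\in V} c({\bf u})\,\omega^{[{\bf a}, {\bf u}]} = 0$ for all ${\bf a}\in V$. The final step is Fourier inversion on the finite abelian group $V$: I would multiply by $\omega^{-[{\bf a}, {\bf u'}]}$ and sum over ${\bf a}\in V$, using the orthogonality relation $\sum_{{\bf a}\in V}\omega^{[{\bf a}, {\bf u}-{\bf u'}]} = |V|\,\delta_{{\bf u}, {\bf u'}}$. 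This relation holds precisely because the symplectic form is non-degenerate, so that ${\bf a}\mapsto\omega^{[{\bf a},{\bf u}-{\bf u'}]}$ is the trivial character exactly when ${\bf u}={\bf u'}$. It forces $c({\bf u'}) = 0$ for every ${\bf u'}$, i.e. $W_\rho({\bf u}) = p({\bf u})\geq 0$, which is the claim.

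I expect no serious obstacle once Lemma~\ref{lemma:value_are_characters} is in hand: the two genuinely load-bearing facts are that value assignments are characters (so the $\lambda_\nu$ can be indexed by phase-space points ${\bf w}_\nu$) and that ${\bf a}\mapsto \omega^{[{\bf a}, \cdot]}$ ranges over all characters of $V$. The only points demanding care are the sign conventions in the identification of characters with the symplectic pairing, and verifying that the orthogonality relation used in the last step is exactly the non-degeneracy of $[\cdot,\cdot]$; the remainder is the standard linear independence of the characters of a finite abelian group.
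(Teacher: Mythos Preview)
Your argument is correct and rests on the same two ingredients as the paper's proof: Lemma~\ref{lemma:value_are_characters} (value assignments are characters) together with character orthogonality on the finite abelian group $V$. The paper organizes the computation slightly differently---it expands $W_\rho({\bf u})$ directly from the definition of $A_{\bf u}$, substitutes the value-assignment expression for $\Tr(T_{\bf v}\rho)$, and then observes that each inner sum $\sum_{{\bf v}\in V}\omega^{[{\bf u,v}]}\lambda_\nu({\bf v})$ is either $0$ or $|V|$---whereas you instead invoke Lemma~\ref{lemma:expectation_Wigner} to obtain two Fourier expansions of $\Tr(T_{\bf a}\rho)$ and then invert. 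Your route has the minor advantage that it simultaneously establishes the identification $W_\rho = p$, which is precisely the content of the Corollary immediately following Proposition~\ref{prop:NC_implies_non_negativity}; the paper proves that separately by refining its direct computation.
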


\begin{proof}
Given a set of NC value assignments, let us compute the value of the Wigner 
function of $\rho$ to prove that it is non-negative. We have
\begin{align*}
W_{\rho}({\bf u}) 
& = d^{-n} \Tr(A_{\bf u} \rho)\\
& = d^{-2n} \sum_{{\bf v} \in V} \omega^{[{\bf u, v}]} \Tr( T_{\bf v} \rho)\\
& = d^{-2n} \sum_{{\bf v} \in V} \omega^{[{\bf u, v}]} \sum_{\nu \in S} \lambda_{\nu}({\bf v}) q_{\rho}(\nu)\\
& = d^{-2n} \sum_{\nu \in S} \left( \sum_{{\bf v} \in V}  \omega^{[{\bf u, v}]} \lambda_{\nu}({\bf v}) \right) q_{\rho}(\nu)
\end{align*}
The fact that such a sum is positive or even real is not clear. However, Lemma~\ref{lemma:value_are_characters} shows that $\omega^{[{\bf u}, \cdot]} \lambda_{\nu}(\cdot)$ is 
also a character. Hence, any sum $\sum_{{\bf v} \in V} \omega^{[{\bf u, v}]} \lambda_{\nu}({\bf v})$ is either 0 or $d^{2n}$ \cite{lang2002:algebra}.
Since $q_\rho(\nu) \geq 0$, this proves that $W_\rho({\bf u}) \geq 0$.
\end{proof}

Actually the HVM derived from the discrete Wigner function is essentially unique.
The following corollary shows that the distribution $q_\rho$ of any set of NC
value assignments can be identified with the Wigner function distribution $W_\rho$
over $V$.

\begin{coro}
Let $n \geq 2$ and let $\rho$ be a $n$-qudits state of odd local dimension $d > 1$.
If $(S, q_\rho, \lambda)$ is a NC set of value assignments for $\rho$
then there exists an bijective map $\sigma: S \rightarrow V$ such that
$$
q_{\rho}(\nu) = W_\rho(\sigma(\nu))
$$
for all $\nu \in S$.
\end{coro}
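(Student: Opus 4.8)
The plan is to show that any set of NC value assignments for $\rho$ is, up to relabelling, the Wigner representation itself. The key fact is Lemma~\ref{lemma:value_are_characters}: every value assignment $\lambda_\nu$ is a character of $V$, and every character of $V$ has the form $\omega^{[\cdot,\,{\bf w}]}$ for a unique ${\bf w} \in V$. So I would first define $\sigma: S \rightarrow V$ by sending each $\nu$ to the unique ${\bf w} \in V$ with $\lambda_\nu = \omega^{[\cdot,\,{\bf w}]}$. Invoking the standing assumption (made just after Def.~\ref{def:value_assignment}) that distinct $\nu \in S$ carry distinct value assignments, this $\sigma$ is injective, and $|S| \leq |V| = d^{2n}$.

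Next I would compute $W_\rho(\sigma(\nu))$ directly, reusing the calculation in the proof of Prop.~\ref{prop:NC_implies_non_negativity}. For a fixed target ${\bf u} \in V$ we have
\begin{equation*}
W_{\rho}({\bf u}) = d^{-2n} \sum_{\mu \in S} \left( \sum_{{\bf v} \in V}  \omega^{[{\bf u, v}]} \lambda_{\mu}({\bf v}) \right) q_{\rho}(\mu).
\end{equation*}
Writing $\lambda_\mu = \omega^{[\cdot,\,\sigma(\mu)]}$, the inner sum becomes $\sum_{{\bf v} \in V} \omega^{[{\bf u},{\bf v}] + [{\bf v},\sigma(\mu)]} = \sum_{{\bf v}\in V} \omega^{[{\bf u}-\sigma(\mu),\,{\bf v}]}$, which by character orthogonality equals $d^{2n}$ when $\sigma(\mu) = {\bf u}$ and $0$ otherwise (using nondegeneracy of the symplectic form). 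Taking ${\bf u} = \sigma(\nu)$, only the term $\mu = \nu$ survives, so $W_\rho(\sigma(\nu)) = q_\rho(\nu)$ for every $\nu \in S$, which is the claimed identity.

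It remains to promote the injection $\sigma$ to a bijection, and this is the step I expect to be the only real subtlety. Injectivity already gives $|S| \leq d^{2n}$; I must rule out $|S| < d^{2n}$. The identity just proved shows $W_\rho(\sigma(\nu)) = q_\rho(\nu) \geq 0$ and, for any ${\bf u}$ not in the image of $\sigma$, the same computation gives $W_\rho({\bf u}) = 0$. Since $\sum_{{\bf u} \in V} W_\rho({\bf u}) = d^{-n}\Tr(\rho) = 1$ (the normalization of the Wigner function, which follows from $\Tr(A_{\bf u}) = d^n\delta_{{\bf u},0}$ together with Eq.~\eqref{eq:W_rho_decomposition}), the distribution $W_\rho$ is genuinely supported on the image of $\sigma$ and matches $q_\rho$ there; thus $q_\rho$ and $W_\rho$ have the same total mass on $\sigma(S)$. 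If one insists on $\sigma$ being literally a bijection onto all of $V$, the clean fix is to note that whenever some ${\bf u}_0 \notin \sigma(S)$ we have $W_\rho({\bf u}_0) = 0$, so we may freely enlarge $S$ by adjoining extra points with probability zero and extend $\sigma$ to the corresponding characters; after this padding $\sigma$ is a bijection $S \rightarrow V$ with $q_\rho = W_\rho \circ \sigma$ preserved, since only zero-weight points were added. This completes the argument.
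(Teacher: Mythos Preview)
Your argument is essentially identical to the paper's: both invoke Lemma~\ref{lemma:value_are_characters} to write each $\lambda_\nu$ as a character indexed by a phase-space point, feed this into the computation from Proposition~\ref{prop:NC_implies_non_negativity}, collapse the inner sum by character orthogonality, and then pad $S$ with zero-probability elements to promote the injection to a bijection (the paper pads before computing, you pad after, but the content is the same). One minor slip in your aside on normalisation: $\Tr(A_{\bf u}) = 1$ for every ${\bf u}$ (not $d^n\delta_{{\bf u},0}$), so $\sum_{\bf u} W_\rho({\bf u}) = \Tr(\rho) = 1$ directly from Eq.~\eqref{eq:W_rho_decomposition}; this does not affect the proof, since the padding step does not actually rely on normalisation.
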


\begin{proof}
Let us refine the argument of the proof of Proposition~\ref{prop:NC_implies_non_negativity}.
By Lemma~\ref{lemma:value_are_characters} and since value assignments 
corresponding to distinct states of $S$ are assumed to be different,
there exists an injective map $\phi: S \rightarrow V$ such that 
$\lambda_\nu = \omega^{[\phi(\nu), \cdot]}$.
Without loss of generality, we can assume that $\phi$ is sujective by adding
extra elements to $S$ corresponding to the missing characters of $V$. 
For these new elements $\nu \in S$, we simply set $q_\rho(\nu) = 0$ to 
preserve the prediction of this triple.
With this notation, the expression of $W_\rho({\bf u})$ obtained in the previous 
proof becomes
\begin{equation} \label{eqn:proof:VA_to_W}
W_\rho({\bf u})
= d^{-2n} \sum_{\nu \in S} \left( \sum_{{\bf v} \in V}  \omega^{[{\bf u} + \phi(\nu), {\bf v}]} \right) q_{\rho}(\nu)
= \sum_{\nu \in S} \delta_{{\bf u} + \phi(\nu), {\bf 0}} \cdot q_{\rho}(\nu) \cdot
\end{equation}
For all vectors $\bf u$, there exists a unique state $\nu \in S$ such that 
${\bf u} + \phi(\nu) = {\bf 0}$. Denote by $\nu_{\bf u}$
$\phi'({\bf u})$ 
this state. Then, Eq.\eqref{eqn:proof:VA_to_W} becomes
$
W_\rho({\bf u}) = q_{\rho}(\nu_{\bf u}).
$
To conclude the proof, note that the map ${\bf u} \rightarrow \nu_{\bf u}$ is invertible,
by bijectivity of $\phi$.
Its inverse is the map $\sigma$ of the corollary.
\end{proof}

\section{Operational definition of non-contextuality} \label{sec:equiv_NC_def}

Our proof of the equivalence between non-contextuality and non-negativity
of the Wigner function relies on the choice of a simple definition of contextuality
(Def.~\ref{def:value_assignment}).
The purpose of this section is to prove that, although simple, this definition is sufficient 
to capture the same notion of contextuality as the one considered in previous
work.
Namely, we prove that the notion of NC contextual value assignments is equivalent to
the operational definition of contextuality given below.

\subsection{Deterministic hidden variable model}

The role of a hidden variable model (HVM) is to describe the outcome distribution of 
Pauli measurements for a state $\rho$.
In what follows, an ordered family of commuting operators 
$C = \{ T_{{\bf a_1}}, \dots, T_{{\bf a_m}} \}$ 
is called a {\em context}.
This guarantees that they can be measured simultaneously.
We denote by $\cal C$ the set of all the contexts, that is of all ordered tuples of commuting 
Heisenberg-Weyl operators $T_u$.

\medskip
The HVM considered in Def.~\ref{def:value_assignment} is designed to predict 
the expectation of single Pauli operators. We now consider a framework which is 
{\em a priori} more general. 
The HVM is required to predict the outcome distribution of the measurement of
any Pauli context $C \in {\cal C}$.

\begin{defi} \label{defi:HVM}
A HVM for a state $\rho$ is defined to be a triple $(S, q_\rho, p)$ 
such that $S$ is a set, $q_\rho$ is a distribution over $S$ and
$p = (p_C)_{C \in {\cal C}}$ is a family of conditional probability distributions
$p_C({\bf s} | \nu)$ over the possible outcome ${\bf s}$ for any context $C \in {\cal C}$ 
and for any state $\nu \in S$.
We require that the prediction of the HVM matches the quantum mechanical 
value, \emph{i.e.}
\begin{align} \label{eq:HVM_prediction}
\Tr(\Pi_{C}^{\bf s} \rho) = \sum_{\nu \in S} p_C({\bf s} | \nu) q_\rho(\nu),
\end{align}
for all context $C \in {\cal C}$ and for all outcome ${\bf s} \in \Z_d^{m}$.
\end{defi}

The set $S$ is the set of states of the HVM. Note that this set is not {\em a priori} 
the same as the set used in Def.~\ref{def:value_assignment}.
We understand the probability $q_\rho(\nu)$ as the probability that the system is in the 
state $\nu \in S$.
As suggested by the notation, the probability $p_C(s | \nu)$ can be interpreted as the 
probability of an outcome ${\bf s}$ when measuring the operators of the context $C$ and 
when the system is in the state $\nu$ of the HVM. It is then natural to define the 
prediction of the HVM as in Eq.\eqref{eq:HVM_prediction}.

\medskip
In the present work, we consider HVM that are deterministic in the following sense. We
can associate a fixed measurement outcome $\bf s$ with each state $\nu$ of the model.
In other words, conditional probabilities $p_C({\bf s} | \nu)$ are delta functions.

\begin{defi} \label{def:HVM_deterministic}
A HVM $(S, q_\rho, p)$ is said to be {\em deterministic} if conditional distributions $p_C$
are all delta functions, {\em  i.e.} for all $C \in {\cal C}$ and for all outcome ${\bf s}$, we have
$
p_{C}({\bf s} | \nu) = \delta_{\alpha_\nu(C), {\bf s}}
$
for some map $\alpha_\nu$ that associates a value $\alpha_\nu(C) \in \Z_d^{|C|}$
with each context.
\end{defi}

We often denote such a HVM by the triple $(S, q, \alpha)$. 
When $C$ contains $m$ operators, both $\alpha_\nu(C)$ and ${\bf s}$ are
$m$-tuples. If $\alpha_\nu(C)  = (x_1, \dots, x_m)$, then 
$\delta_{\alpha_\nu(C), {\bf s}}$ is the product of the delta functions 
$\delta_{x_i, s_i}$.

\subsection{Operational definition of non-contextuality}

Within our formalism restricted to measurement of Pauli operators $T_{\bf a}$,
there exists different ways to realize a measurement. The operational notion
of contextuality refers to the fact that the conditional distribution of outcomes in 
the HVM may depend on the way the measurement is {\em implemented}.
This section presents a formal definition of this notion.

\medskip
To illustrate what should be the right definition of an implementation, we start 
with some examples.
We can measure the operators of a context 
$C = \{T_{\bf a_1}, \dots, T_{\bf a_\ell} \} \in {\cal C}$
and return the corresponding outcome ${\bf s} = (s_1, \dots, s_\ell)$.
This realizes the measurement defined by the family of orthogonal projectors
$(\Pi_{\bf a}^{\bf s})_{\bf s}$ for ${\bf s} \in \Z_d^m$.
Different contexts $C$ may produce the same family of projectors, that is
the same measurement.
For instance, the 2-qudit measurement defined by the projectors
$
P^{x_1, x_2} = \proj {x_1} \otimes \proj {x_2}, 
$
indexed by $(x_1, x_2) \in \Z_d^2$,
can be implemented via the contexts $C = \{X \otimes I, I \otimes X\}$
or alternatively via $C' = \{X \otimes I, X \otimes X\}$.

\medskip
We can also measure of a family of operators but read only a subset of the outcomes 
or even of function of the outcomes. 
Such a classical postprocessing extends the set of projectors that can be reached.
For instance, for ${\bf u} \in V$ consider a pair of projectors 
$\{ \Pi^0 = \Pi_{\bf u}^0, \Pi^1 = I - \Pi_{\bf u}^0 \}$.
This measurement is realized by measuring $T_{\bf u}$ and returning $0$ 
if the outcome is 0 and returning 1 for all other outcome $s \in \Z_d \backslash \{0\}$.
To provide an example with a less trivial postprocessing consider 
the measurement $(\Pi_{\bf u+v}^t)$ with outcome $t \in \Z_d$, for ${\bf u, v} \in V$
such that $[{\bf u, v}] = 0$.
We can realize this measurement by measuring the pair $C = \{T_{\bf u}, T_{\bf v}\}$ and 
by returning only the sum $t = s_{\bf u} + s_{\bf v} \in \Z_d$ of the two outcomes.

\medskip
These examples motivate our definition of an implementation of a measurement.
Consider a measurement defined by a family of stabilizer projectors $(\Pi^s)_{s \in O}$,
that sum up to identity, indexed by the elements of a finite set $O$.
\begin{defi} \label{defi:implementation}
An {\em implementation} of a measurement $(\Pi^{\bf s})_{{\bf s} \in O}$, is defined to be
a pair $(C, \sigma_C)$ where $C = \{T_{\bf a_1}, \dots, T_{\bf a_\ell} \} \in {\cal C}$ 
and $\sigma_C : \Z_d^\ell \rightarrow O$ is a surjective postprocessing map such that
$$
\Pi^{\bf s} = \sum_{\substack{{\bf t} \in \Z_d^\ell\\ \sigma_C({\bf t}) = {\bf s}}} \Pi_{\bf a}^{{\bf t}}
$$
for all ${\bf s} \in O$.
\end{defi}
Neither the choice of the context $C$ nor the corresponding map $\sigma_C$
is unique. The postprocessing map is assumed to be surjective only to ensure that 
all the projectors of the family $(\Pi^{\bf s})_{{\bf s} \in O}$ are reached.

\medskip
Let $(S, q_\rho, \nu)$ be a HVM describing measurement outcomes 
for a state $\rho$.
Consider an implementation $(C, \sigma_C)$ of a measurement $(\Pi^{\bf s})_{{\bf s} \in O}$.
By definition of the projectors $\Pi^{\bf s}$, the HVM predicts that the outcome ${\bf s}$ 
occurs with probability
$$
p_{C}(\sigma_C^{-1}({\bf s}) | \nu) = \sum_{\substack{{\bf t} \in \Z_d^{|C|}\\ \sigma_C({\bf t}) = {\bf s}}}  p_C( {\bf t} | \nu)
$$
when the system is in position $\nu$.
Quantum mechanics predicts that the distribution of the outcome of a 
measurement only depends on the projectors $\Pi^{\bf s}$ and not on the 
implementation. That means that for any two implementations $(C, \sigma_C)$
and $(C', \sigma_{C'})$ of a measurement $\Pi^{\bf s}$, we have
\begin{align*}
\Tr(\Pi^{\bf s} \rho) 
& = \sum_{\nu \in S} p_{C}(\sigma_C^{-1}({\bf s}) | \nu) q_\rho(\nu)\\
& = \sum_{\nu \in S} p_{C'}(\sigma_{C'}^{-1}({\bf s}) | \nu) q_\rho(\nu)
\end{align*}
However, nothing in Def.~\ref{defi:HVM} requires that the probabilities
$p_{C}(\sigma_{C}^{-1}({\bf s}) | \nu)$ and $p_{C'}(\sigma_{C'}^{-1}({\bf s}) | \nu)$ coincide
for all $\nu \in S$.
This assumption is a notion of non-contextuality.

\begin{defi} \label{defi:NC_operational}
A HVM $(S, q_\rho, p)$ is said to be {\em non-contextual} if
for all implementation $(C, \sigma_C)$ of a measurement $(\Pi^{\bf s})_{{\bf s} \in O}$,
for all $\nu \in S$, the conditional probability
$
p_C(\sigma_C^{-1}({\bf s}) | \nu)
$
depends only on the projector $\Pi^{\bf s}$ and not on the 
implementation $(C, \sigma_C)$.
\end{defi}

For instance, we saw that the measurement $(\Pi^{s}_{\bf u+v})_{{s} \in \Z_d}$ can be 
implemented by $C = \{T_{\bf u+v}\}$ with a trivial map $\sigma_C$ but also
using $C' = \{T_{\bf u}, T_{\bf v}\}$ with the postprocessing map $\sigma_{C'}(s_{\bf u}, s_{\bf v}) = s_{\bf u}+s_{\bf v}$.
For a non-contextual model, the corresponding conditional probabilities
$p_C({s} | \nu)$ and $p_{C'}( \sigma_{C'}^{-1}({s}) | \nu)$ coincide for all states
$\nu$ of the HVM. In this example, we have 
$\sigma_{C'}^{-1}({s}) = \{ ({t}, {s-t}) \ | \ {t} \in \Z_d \}$.

\subsection{Equivalence of the two definitions of non-contextuality}
 \label{section:equiv_NCvalue_NCHVM}

In this section we prove that the existence of a NCHMV as given in Def.~\ref{defi:NC_operational}
and the existence of a set of NC value assignments as in Def.~\ref{def:value_assignment} 
are two equivalent notions of non-contextuality. 
This is the equivalence (i) $ \Leftrightarrow $ (ii) in Fig.~\ref{fig:NC_equivalences}.

\medskip
The following proposition shows that the implication (ii) $\Rightarrow$ (i) holds.

\begin{prop} \label{prop:NCvalue_implies_NCHVM}
Let $(S, q_\rho, \lambda)$ be a set of NC value assignments.
Then $(S, q_\rho, p)$ defines a deterministic NCHVM 
where $p$ is defined by
\begin{align} \label{eqn:pC_from_values}
p_C({\bf s} | \nu) = \frac{1}{|M_{\bf a}|} \sum_{{\bf u} \in M_{\bf a}} \omega^{\ell_s({\bf u})} \cdot \lambda_\nu(u)
\end{align}
for all $C = \{T_{\bf a_1}, \dots, T_{\bf a_m}\} \in {\cal C}$.
\end{prop}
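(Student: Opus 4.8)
The plan is to verify in turn the three defining properties of a deterministic non-contextual HVM: that $p$ is a family of genuine conditional distributions reproducing the quantum statistics (Def.~\ref{defi:HVM}), that it is deterministic (Def.~\ref{def:HVM_deterministic}), and that it is non-contextual (Def.~\ref{defi:NC_operational}). The guiding observation is that formula~\eqref{eqn:pC_from_values} is exactly the projector formula~\eqref{eqn:projectors_sum} with the substitution $T_{\bf u}\mapsto\lambda_\nu({\bf u})$; since $\lambda_\nu$ is multiplicative on pairwise commuting vectors, this substitution is well behaved on the isotropic subspace $M_{\bf a}$.

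First I would establish determinism and normalisation. Every pair of vectors in $M_{\bf a}$ is orthogonal, so condition~1 of Def.~\ref{def:value_assignment} makes $\lambda_\nu$ a character of the group $M_{\bf a}$ (note that only this condition, not the full Lemma~\ref{lemma:value_are_characters}, is needed, since here $\lambda_\nu$ is evaluated only on isotropic subspaces), and ${\bf u}\mapsto\omega^{\ell_{\bf s}({\bf u})}$ is a character as well. Thus $p_C({\bf s}|\nu)$ is the average of the character $\omega^{\ell_{\bf s}}\lambda_\nu$ over $M_{\bf a}$, which by orthogonality of characters equals $1$ if this character is trivial and $0$ otherwise. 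Because the admissible outcomes ${\bf s}$ (those with $\Pi_{\bf a}^{\bf s}\neq 0$) correspond bijectively to the linear forms $\ell_{\bf s}$ on $M_{\bf a}$ via $s_i=\ell_{\bf s}({\bf a_i})$, exactly one of them makes $\omega^{\ell_{\bf s}}\lambda_\nu$ trivial; hence $p_C(\cdot|\nu)$ is a delta function supported on a single outcome $\alpha_\nu(C)$, giving both determinism and $\sum_{\bf s}p_C({\bf s}|\nu)=1$.

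Next I would check the quantum-matching condition~\eqref{eq:HVM_prediction}. Inserting~\eqref{eqn:pC_from_values}, exchanging the sums over $\nu$ and ${\bf u}$, and using condition~2 of Def.~\ref{def:value_assignment} to replace $\sum_{\nu}\lambda_\nu({\bf u})q_\rho(\nu)$ by $\Tr(T_{\bf u}\rho)$, one finds $\sum_\nu p_C({\bf s}|\nu)q_\rho(\nu)=\Tr\big(\frac{1}{|M_{\bf a}|}\sum_{{\bf u}\in M_{\bf a}}\omega^{\ell_{\bf s}({\bf u})}T_{\bf u}\,\rho\big)$, which is $\Tr(\Pi_{\bf a}^{\bf s}\rho)$ by~\eqref{eqn:projectors_sum}. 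This is a short direct computation with no obstruction.

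The substantial part, and where I expect the real obstacle, is non-contextuality. Fixing an implementation $(C,\sigma_C)$ of a measurement $(\Pi^{\bf s})_{{\bf s}\in O}$, I would sum~\eqref{eqn:pC_from_values} over the fibre $\sigma_C^{-1}({\bf s})$ and exchange summations to write $p_C(\sigma_C^{-1}({\bf s})|\nu)=\sum_{{\bf u}\in M_{\bf a}}c_{\bf u}\,\lambda_\nu({\bf u})$ with $c_{\bf u}=\frac{1}{|M_{\bf a}|}\sum_{{\bf t}\in\sigma_C^{-1}({\bf s})}\omega^{\ell_{\bf t}({\bf u})}$. By Def.~\ref{defi:implementation} together with~\eqref{eqn:projectors_sum}, these $c_{\bf u}$ are precisely the coefficients in the expansion $\Pi^{\bf s}=\sum_{\bf u}c_{\bf u}T_{\bf u}$ of the projector in the Heisenberg--Weyl family. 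Since $(T_{\bf u})_{{\bf u}\in V}$ is an orthonormal basis of matrix space (the $A_{\bf u}$ being an invertible transform of it), this expansion is unique, so the coefficient vector $(c_{\bf u})$ is determined by $\Pi^{\bf s}$ alone; therefore $p_C(\sigma_C^{-1}({\bf s})|\nu)$ depends only on $\Pi^{\bf s}$ and on $\nu$, which is exactly Def.~\ref{defi:NC_operational}. The delicate point is this passage to a basis-independent coefficient vector: for two implementations living on different subspaces $M_{\bf a}$ and $M_{\bf a'}$, uniqueness forces one and the same vector supported on $M_{\bf a}\cap M_{\bf a'}$, so both evaluations collapse to the identical character sum over that intersection — this is what turns the operational equality of Def.~\ref{defi:NC_operational} into an identity holding for each hidden state $\nu$ separately.
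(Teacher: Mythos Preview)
Your proposal is correct and follows the same overall strategy as the paper: determinism via the character-sum structure of~\eqref{eqn:pC_from_values} on the isotropic subspace $M_{\bf a}$, and the quantum-matching condition by combining~\eqref{eqn:projectors_sum} with condition~2 of Def.~\ref{def:value_assignment}. For non-contextuality the paper is content with a one-line remark that~\eqref{eqn:pC_from_values} depends only on the subspace $M_{\bf a}$ and the form $\ell_{\bf s}$, not on the choice of generators ${\bf a_i}$; your argument via the uniqueness of the Heisenberg--Weyl expansion of $\Pi^{\bf s}$ is a genuinely more complete treatment, since it directly handles arbitrary implementations $(C,\sigma_C)$ with non-trivial postprocessing as in Def.~\ref{defi:implementation}, whereas the paper's phrasing only explicitly covers the case of two contexts generating the same $M_{\bf a}$ with trivial $\sigma_C$.
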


The notations $M_a$ and $\ell_{\bf s}$ used in Eq.\eqref{eqn:pC_from_values}
were introduced in Eq.\eqref{eqn:projectors_sum}.
Recall that, by deterministic, we mean $p_C({\bf s} | \nu) \in \{0, 1\}$.

\begin{proof}
First, let us prove that the HVM $(S, q_\rho, p)$ defined in this proposition
produces the same prediction as quantum mechanics.
By Eq.\eqref{eqn:projectors_sum}, the probability of an outcome ${\bf s} = (s_1, \dots, s_m)$
when measuring $\{T_{\bf a_1}, \dots, T_{\bf a_m}\}$ is given by 
$$
\Tr(\Pi_{\bf a}^{\bf s} \rho) = \frac{1}{|M_{\bf a}|} \sum_{{\bf u} \in M_{\bf a}}  \omega^{\ell_s({\bf u})} \Tr(T_{\bf u} \rho).
$$
Replacing $\Tr(T_{\bf u} \rho)$ by its value in terms of the value assignments, we obtain
$$
\Tr(\Pi_{\bf a}^{\bf s} \rho) = \sum_{\nu \in S} \left( \frac{1}{|M_{\bf a}|} \sum_{{\bf u} \in M_{\bf a}}  \omega^{\ell_{\bf s}({\bf u})} \cdot \lambda_\nu({\bf u}) \right) q_{\rho}(\nu).
$$
This proves that the HVM $(S, q_\rho, p)$ defined by Eq.\eqref{eqn:pC_from_values}
reproduces the quantum mechanical predictions.

It is deterministic since the sum 
$\sum_{u \in M_a}  \omega^{\ell_s({\bf u})} \cdot \lambda_\nu({\bf u})$,
which is the sum of the values of a character, is either 0 or $|M_{\bf a}|$,
implying that $p_C({\bf s} | \nu) \in \{0, 1\}$.
This HVM is also non-contextual. Indeed, conditional probabilities are defined in 
such a way that they do not depend on the particular choice of
generators ${\bf a_i}$ for the subspace $M_{\bf a}$, that is they do not depend
on the context.
\end{proof}

We now prove the converse statement.
Together with the non-contextuality assumption, determinism of the HVM yields extra 
compatibility contraints on the functions $\alpha_\nu$.
The following proposition proves that $\alpha_\nu$ is completely 
determined by its value $\alpha_\nu(\{T_{\bf u}\})$ over single operator contexts $\{T_{\bf u}\}$.
We shorten the notation $\alpha_\nu(\{T_{\bf u}\})$ by $\alpha_\nu({\bf u})$.
Moreover, we show that $\alpha_\nu$ is additive when $[{\bf u, v}] = 0$.
Then, we will prove that we can construct a set of NC value assignments from 
the maps $\alpha_\nu$.

\begin{prop} \label{prop:sharpness_implies_additivity}
If $(S, p, \alpha)$ is a deterministic NCHVM then, 
\begin{itemize}
\item for all context $C = \{T_{\bf a_1}, \dots, T_{\bf a_m} \} \in \cal C$, we have 
 $$
 \alpha_\nu(\{T_{\bf a_1}, \dots, T_{\bf a_m}\}) = (\alpha_\nu({\bf a_1}), \dots, \alpha_\nu({\bf a_m})),
 $$
\item if $T_{\bf u}$ and $T_{\bf v}$ commute, \emph{i.e.} if $[{\bf u, v}] = 0$, we have
$$
\alpha_\nu({\bf u+v}) = \alpha_\nu({\bf u}) + \alpha_\nu({\bf v}) \cdot
$$
\end{itemize}
\end{prop}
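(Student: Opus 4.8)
The plan is to establish both bullet points by exploiting the flexibility in choosing implementations of a single measurement, together with the determinism and non-contextuality assumptions. The core idea is that the same family of projectors can be reached through many different contexts and postprocessing maps, and non-contextuality forces the HVM's deterministic outcomes to agree across all of them. I would set up each claim by writing down two implementations of one carefully chosen measurement, compare the outcomes they predict, and read off the desired identity.

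For the first bullet, I would fix a context $C = \{T_{\bf a_1}, \dots, T_{\bf a_m}\}$ and argue coordinate by coordinate. The measurement $(\Pi_{\bf a}^{\bf s})_{\bf s}$ can be implemented directly by $(C, \mathrm{id})$, giving outcome $\alpha_\nu(C)$. But the marginal measurement that reads only the $i$-th coordinate—i.e.\ the projectors $(\Pi_{\bf a_i}^{s})_{s \in \Z_d}$—can be implemented either by the singleton context $\{T_{\bf a_i}\}$, yielding $\alpha_\nu({\bf a_i})$, or by first measuring the full context $C$ and then applying the projection-onto-$i$-th-coordinate postprocessing map $\sigma(t_1,\dots,t_m) = t_i$. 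Non-contextuality (Def.~\ref{defi:NC_operational}) forces these two implementations to assign the same deterministic outcome, so the $i$-th coordinate of $\alpha_\nu(C)$ must equal $\alpha_\nu({\bf a_i})$. Doing this for each $i$ assembles the vector identity $\alpha_\nu(C) = (\alpha_\nu({\bf a_1}), \dots, \alpha_\nu({\bf a_m}))$.

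For the second bullet, I would use the pair context $C = \{T_{\bf u}, T_{\bf v}\}$, which is legitimate precisely because $[{\bf u, v}] = 0$. The measurement $(\Pi_{\bf u+v}^t)_{t \in \Z_d}$ has two implementations exactly as illustrated in the paper's worked example: the singleton $\{T_{\bf u+v}\}$ with the trivial map, giving deterministic outcome $\alpha_\nu({\bf u+v})$, and the pair $C$ with the additive postprocessing $\sigma_C(s_{\bf u}, s_{\bf v}) = s_{\bf u} + s_{\bf v}$. Under the second implementation the deterministic outcome is $\sigma_C(\alpha_\nu({\bf u}), \alpha_\nu({\bf v})) = \alpha_\nu({\bf u}) + \alpha_\nu({\bf v})$, where I have used the first bullet to identify the outcome of the pair context with the pair of single-operator outcomes. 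Non-contextuality equates the two, yielding $\alpha_\nu({\bf u+v}) = \alpha_\nu({\bf u}) + \alpha_\nu({\bf v})$.

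The main obstacle I anticipate is purely formal rather than conceptual: one must verify that the postprocessing maps used above genuinely satisfy the defining relation of Def.~\ref{defi:implementation}, namely that summing the fine-grained projectors $\Pi_{\bf a}^{\bf t}$ over the fibers of $\sigma$ recovers the coarse projectors. For the coordinate-projection map this is the statement that summing $\Pi_{\bf a}^{\bf t}$ over all ${\bf t}$ with fixed $t_i$ gives $\Pi_{\bf a_i}^{t_i}$, and for the additive map it is the analogous identity reconstructing $\Pi_{\bf u+v}^t$; both follow from Eq.~\eqref{eqn:projectors_sum} and the fact that the operators commute, but they should be checked so that the implementations are valid inputs to the non-contextuality hypothesis. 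A subtle point worth flagging is that the second bullet relies on the first (to split the pair outcome into its two components), so the two claims must be proved in this order.
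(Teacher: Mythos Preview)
Your proposal is correct and follows essentially the same route as the paper: for the first item the paper also compares the singleton context $\{T_{\bf a_i}\}$ with the full context $C$ equipped with the projection-onto-$i$-th-coordinate postprocessing, and for the second item it uses the pair context $\{T_{\bf u},T_{\bf v}\}$ with the additive postprocessing versus the singleton $\{T_{\bf u+v}\}$, invoking the first item exactly as you anticipated. The only difference is cosmetic---the paper writes out the delta-function identities explicitly rather than phrasing the comparison at the level of deterministic outcomes---and your remark about verifying that the fiber sums of projectors recover the coarse projectors is a point the paper leaves implicit.
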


\begin{proof}
To prove the first item, we consider two implementations of the measurement 
$(\Pi_{\bf a_i}^s)_{s \in \Z_d}$ for some ${\bf a_i} \in V$.
First, one can simply measure $\{T_{\bf a_i}\}$ and reveal the outcome $s_i$.
A second implementation is obtained via the context 
$C = \{T_{\bf a_1}, \dots, T_{\bf a_m} \} \in \cal C$ and the map
$\sigma_C$ that sends a measurement outcome ${\bf t} = (t_1, \dots, t_m)$
onto its $i$-th component $t_i$. In other words, we measure these $m$
operators but we only keep the outcome of $T_{\bf a_i}$.
By non-contextuality, these two procedures yield the same conditional
probabilities at the level of the HVM, that is
$
p_{C}(\sigma_C^{-1}(s) | \nu) = p_{T_{\bf a_i}}(s | \nu)
$
for all $\nu \in S$, for all $s \in \Z_d$.
Replacing the first term by its definition, we obtain
\begin{align} \label{eqn:proof:multiplicativity}
\sum_{\substack{(t_1, \dots, t_m) \in \Z_d^m\\ t_i = s}} p_{C}({\bf t} | \nu) = p_{T_{\bf a_i}}(s | \nu).
\end{align}
Fix $\nu \in S$ and denote $\alpha_\nu(C) = (x_1, \dots, x_m)$.
Sharpness of the measurements implies that
$
p_{C}({\bf t} | \nu) = \prod_{j=1}^m \delta_{x_j, t_j}
$
and 
$
p_{T_{\bf a_i}}(s | \nu) = \delta_{\alpha_\nu(\bf a_i), s}.
$
Injecting these expressions in Eq.\eqref{eqn:proof:multiplicativity} produces the
equality
\begin{align} \label{eqn:proof:multiplicativity2}
\sum_{\substack{(t_1, \dots, t_m) \in \Z_d^m\\ t_i = s}} \prod_{j=1}^m \delta_{x_j, t_j} = \delta_{\alpha_\nu({\bf a_i}), s}.
\end{align}
that is satisfied for all $s \in \Z_d$.
The only possibility to have a non-trivial product at the left-hand side is to pick
$t_j = x_j$ for all $j \neq i$, leading to
$$
\delta_{x_i, s} = \delta_{\alpha_\nu({\bf a_i}), s}.
$$
This equality is satisfied for all $s \in \Z_d$, proving that
$\alpha_\nu({\bf a_i}) = x_i$.
This concludes the proof of the first property.

\medskip
The second item is proven using two implementations of the measurement
of $(\Pi_{\bf u+v}^s)_{s \in \Z_d}$.
First, we consider the direct implementation by measuring $T_{\bf u+v}$.
Then, we use the context $C = \{T_{\bf u}, T_{\bf v}\}$ with the postprocessing map
$\sigma_C(s_{\bf u}, s_{\bf v}) = s_{\bf u} + s_{\bf v}$.
Non-contextuality leads to 
$$
\sum_{k \in \Z_d} p_{C}((k, s-k) | \nu) = p_{T_{\bf u+v}}(s | \nu).
$$
Using the first result, the delta function describing the conditional
distribution for $C$ is associated with
$
\alpha_\nu(\{T_{\bf u}, T_{\bf v}\}) = (\alpha_\nu({\bf u}), \alpha_\nu({\bf v})).
$
This implies
$$
\sum_{k \in \Z_d} \delta_{\alpha_\nu({\bf u}), k} \cdot \delta_{\alpha_\nu({\bf v}), s-k} 
= \delta_{\alpha_\nu({\bf u+v}), s}.
$$
The left-hand side is equal to $\delta_{\alpha_\nu({\bf v}), s-\alpha_\nu({\bf u})}$
proving that $\alpha_\nu({\bf u}) + \alpha_\nu({\bf v}) = \alpha_\nu({\bf u+v})$.
\end{proof}

As a corollary, we prove that the maps ${\bf u} \mapsto \alpha_\nu({\bf u})$
define a family of NC value assignments. This complete the proof of the equivalence
between (ii) and (iii).

\begin{coro}
If $(S, q_\rho, \alpha)$ is a deterministic NCHVM then
the triple $(S, q_\rho, \lambda)$ where the map $\lambda_\nu: V \rightarrow U_d$ 
is defined by
$
\lambda_\nu({\bf u}) = \omega^{\alpha_\nu({\bf u})}
$
for all $\nu$, 
is a set of NC value assignments.
\end{coro}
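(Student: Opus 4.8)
The plan is to verify directly that the triple $(S, q_\rho, \lambda)$ with $\lambda_\nu({\bf u}) = \omega^{\alpha_\nu({\bf u})}$ satisfies the two conditions of Definition~\ref{def:value_assignment}. The first condition, multiplicativity over commuting pairs, comes for free from the second item of Proposition~\ref{prop:sharpness_implies_additivity}: whenever $[{\bf u, v}] = 0$ we have $\alpha_\nu({\bf u+v}) = \alpha_\nu({\bf u}) + \alpha_\nu({\bf v})$, and exponentiating with base $\omega$ turns this additive identity into $\lambda_\nu({\bf u+v}) = \omega^{\alpha_\nu({\bf u}) + \alpha_\nu({\bf v})} = \lambda_\nu({\bf u}) \lambda_\nu({\bf v})$. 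So each $\lambda_\nu$ is a value assignment in the sense required, and it automatically takes values in $U_d$ since $\alpha_\nu({\bf u}) \in \Z_d$.

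The second, more substantial step is to check that these $\lambda_\nu$ reproduce the quantum expectations, namely $\Tr(T_{\bf u}\rho) = \sum_{\nu \in S} \lambda_\nu({\bf u}) q_\rho(\nu)$. Here I would specialize the HVM prediction \eqref{eq:HVM_prediction} to the single-operator context $C = \{T_{\bf u}\}$. Determinism gives $p_{\{T_{\bf u}\}}(s \mid \nu) = \delta_{\alpha_\nu({\bf u}), s}$, so that $\Tr(\Pi_{\bf u}^s \rho) = \sum_{\nu \in S} \delta_{\alpha_\nu({\bf u}), s}\, q_\rho(\nu)$ for each outcome $s \in \Z_d$. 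I would then expand $T_{\bf u} = \sum_{s \in \Z_d} \omega^s \Pi_{\bf u}^s$, take the trace against $\rho$, interchange the two finite sums over $s$ and $\nu$, and collapse the delta function against $\sum_s \omega^s$. What survives is $\sum_{\nu \in S} \omega^{\alpha_\nu({\bf u})} q_\rho(\nu) = \sum_{\nu \in S} \lambda_\nu({\bf u}) q_\rho(\nu)$, which is exactly the second condition of Definition~\ref{def:value_assignment}.

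I do not expect a genuine obstacle in this corollary, as it is essentially an assembly of results already established. The only point requiring care is to invoke the correct single-operator instance of determinism, writing $p_{\{T_{\bf u}\}}(s \mid \nu) = \delta_{\alpha_\nu({\bf u}), s}$ with the shorthand $\alpha_\nu({\bf u}) := \alpha_\nu(\{T_{\bf u}\})$ fixed before Proposition~\ref{prop:sharpness_implies_additivity}. Everything else reduces to the projector decomposition of $T_{\bf u}$ recalled in Section~\ref{sec:stabilizer} together with the definition of the HVM prediction in Definition~\ref{defi:HVM}, so the corollary follows by combining the additivity statement of Proposition~\ref{prop:sharpness_implies_additivity} with this single-operator computation.
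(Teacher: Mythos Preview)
Your proposal is correct and follows essentially the same route as the paper: invoke the additivity from Proposition~\ref{prop:sharpness_implies_additivity} for the first condition, then expand $T_{\bf u} = \sum_{s} \omega^{s}\Pi_{\bf u}^{s}$, apply the deterministic HVM prediction for the single-operator context, and collapse the delta to obtain $\Tr(T_{\bf u}\rho) = \sum_{\nu} \omega^{\alpha_\nu({\bf u})} q_\rho(\nu)$. There is nothing to add.
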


\begin{proof}
Additivity of the maps $\lambda_\nu$ was proven in Proposition~\ref{prop:sharpness_implies_additivity}.
It remains to prove that this value assignment provides a good prediction for the expectation of 
operators $T_{\bf u}$.
Writing this operator as a linear combination of projectors $T_{\bf u} = \sum_{s \in \Z_d} \omega^s \Pi_{\bf u}^s$,
we find
$$
\Tr(T_{\bf u} \rho) = \sum_{s \in \Z_d} \omega^s \Tr(\Pi_{\bf u}^s \rho).
$$
Using the prediction of the HVM
and sharpness of the measurements, we 
obtain
\begin{align*}
\Tr(T_{\bf u} \rho) 
= \sum_{\nu \in S} \left( \sum_{s \in \Z_d} \omega^s \delta_{\alpha_\nu({\bf u}), s} \right) q_\rho(\nu)
= \sum_{\nu \in S} \omega^{\alpha_\nu({\bf u})} q_\rho(\nu)
\end{align*}
proving the Corollary.
\end{proof}

\section{Acknowlegdments}
The authors thank Ana Bel\'en Sainz,  Mark Howard and Joel Wallman for discussions.
ND acknowledges funding provided by the Institute for Quantum Information and Matter, 
an NSF Physics Frontiers Center (NSF Grant PHY-1125565) with support of 
the Gordon and Betty Moore Foundation (GBMF-2644).
CO  acknowledges funding from NSERC.
JBV acknowledges funding from SIQS, AQuS.
RR is supported by NSERC and Cifar.

\bibliographystyle{unsrt}
\bibliography{contextuality.bib}

\end{document}